\documentclass[10pt,a4paper,english]{article}
\usepackage{amsmath,amscd,amsfonts,eucal,latexsym,amssymb,mathrsfs}
\usepackage{amsxtra, amsthm, calc, bbm}
%
%
%
\newlength{\papersidemargin}
\newlength{\papertopmargin}
\newlength{\maxcarwidth}
\settowidth{\maxcarwidth}{m}
\setlength{\textheight}{48 \baselineskip}
\setlength{\textwidth}{43 \maxcarwidth}
\setlength{\papersidemargin}{(\paperwidth-\textwidth)/2}
\setlength{\oddsidemargin}{-1.0in + \papersidemargin}
\setlength{\evensidemargin}{\oddsidemargin}
\setlength{\headheight}{\baselineskip}
\setlength{\headsep}{1.5 \baselineskip}
\setlength{\papertopmargin}{(\paperheight-\textheight)/2}
\setlength{\topmargin}{-1.0in-\headheight-\headsep+\papertopmargin}
%
%
\theoremstyle{definition}

\theoremstyle{plain}

\theoremstyle{remark}
\newtheorem*{Remark}{Remark}

%
\theoremstyle{definition}
\newtheorem{definition}{Definition}[section]
\theoremstyle{plain}
\newtheorem{theorem}[definition]{Theorem}
\newtheorem{proposition}[definition]{Proposition}
\newtheorem{lemma}[definition]{Lemma}

%

%
%
\newcounter{remcount}

%
\newcounter{propcount}

\newlength{\maxlabelwidth}

%

%
%
%

\def\cD{{\cal D}}

\def\cF{{\cal F}}

\def\cH{{\cal H}}

\def\cK{{\cal K}}

%
\def\bC{{\mathbb C}}

\def\bN{{\mathbb N}}

\def\bR{{\mathbb R}}

\def\bZ{{\mathbb Z}}
%
\def\a{\alpha}
\def\b{\beta}
\def\g{\gamma}        \def\G{\Gamma}
\def\d{\delta}

\def\e{\eta}

       \def\L{\Lambda}
\def\m{\mu}
\def\n{\nu}

\def\p{\pi}

\def\s{\sigma}

\def\t{\tau}

\def\o{\omega}        \def\O{\Omega}
%

%

\newcommand{\sD}{\mathscr{D}}

\newcommand{\sH}{\mathscr{H}}

\newcommand{\sJ}{\mathscr{J}}
\newcommand{\sK}{\mathscr{K}}
\newcommand{\sL}{\mathscr{L}}

\newcommand{\sP}{\mathscr{P}}

\newcommand{\sS}{\mathscr{S}}

\newcommand{\sU}{\mathscr{U}}

%


\renewcommand{\Im}{\textup{Im}\,}

%

%

%

\newcommand{\Lg}{\sL}

\newcommand{\bp}{\boldsymbol{p}}

%

%
\newcommand{\onp}{\o_{\bar{n}}(\bp)}
\newcommand{\orp}{\o_r(\bp)}
\newcommand{\ocp}{\o_c(\bp)}

\newcommand{\wick}[1]{:\mspace{-3mu}#1\mspace{-7mu}:\mspace{-3mu}}

\newcommand{\inst}[1]{$^\textrm{#1}$ }
\newcommand{\email}[1]{e-mail: #1}

\begin{document}
\title{Local aspects of free open bosonic\\
 string field theory}
\author{Luca Tomassini\inst{1}
}

\date{
\parbox[t]{0.9\textwidth}{\footnotesize{%
\begin{enumerate}
\renewcommand{\theenumi}{\arabic{enumi}}
\renewcommand{\labelenumi}{\theenumi}
\item Dipartimento di Matematica, Universit\`a di Roma ``Tor Vergata'', Via della Ricerca Scientifica, 1, I-00133 Roma, Italy, \email{tomassin@mat.uniroma2.it} 
\end{enumerate}
}}
\\
\vspace{\baselineskip}
\today}

\maketitle

\begin{abstract}
We show that strictly local observables with arbitrarily small support in space-time exist in covariant free bosonic string field theory. The main ingredient of the proof is a modified version of the well known DDF operators, which we rigourously define. This result allows in principle the definition of a net of local observable algebras and should be considered as a first step towards the application of methods of algebraic quantum field theory to this case.
\end{abstract}

\section{Introduction}\label{sec:intro}
Interest for an axiomatic approach to string field theory has recently been revived by the work of Dimock, which in a series of papers set up the ground for a rigourous treatment of canonically quantized free bosonic string fields both in the light-cone \cite{Dimock2000} and covariant \cite{Dimock2002} case. The aim of those papers was to put on a firmer ground the study of one important and only rarely touched upon question: the localization properties of the string fields. In fact, the string being an extended object, these are commonly expected to be substantially different from the point particle case. At present, the results of Martinec \cite{Martinec1993} and Hata and Oda \cite{Hata1997} (but see also \cite{Dimock2000}) are widely accepted as a satisfactory answer to this problem and are formulated as follows. The string field is considered as a function of parametrized strings $X=X^\m (\s)$ and its commutator $[\Phi(X),\Phi(Y)]$ is found to vanish if
\begin{equation}\nonumber
\int_0^\pi (X(\s)-Y(\s))^2d\s >0
\end{equation}
(here and in the following we make use of the usual ``string theoretic'' signature of the Minkowski metric), a condition which is often considered as unintuitive since it does not require that all points on the two strings be space-like separated. Notice that this is different from
\begin{equation}\nonumber
\left( \int_0^\pi X(\s)d\s-\int_0^\pi Y(\s)d\s\right)^2 >0
\end{equation}
expressing space-like separation of the respective centers of mass. Unfortunately these formulas can be considered meaningful only in the case of light-cone quantization, where one deals only with {\it observable} quantities. In the covariant approach this is no longer the case and it is a fact that after quantization neither $X(\s)$ nor $\int_0^\pi X(\s)d\s$ are invariant under reparametrizations of the world-sheet.\\
A more intrinsic point of view on the problem is thus in order, in which the meaning attached to the word ``locality'' does not depend too much on a physical interpretation of the $X(\s)$'s. This was started in \cite{Dimock2000} and \cite{Dimock2002}, where the string fields were found to have perfectly local commutators (in the ordinary sense) both in the light-cone and covariant quantization, but in this second case unobservable quantities and an indefinite metric state space have to be introduced \cite{Strocchi1993}. As for the free electromagnetic field, the physical theory is then recovered by a Gupta-Bleuler-like procedure and in particular observable quantities can be obtained from the fields by smearing with test functions satisfying certain conditions. In case of incompatibility of these conditions with compact support, the string fields would be local but at the same time no local {\it observables} would exist. In the language of algebraic quantum field theory \cite{Haag:1996a}, this is the situation of a local field net having no local subnet left fixed by the gauge group action.\\
The main result of this paper is a proof that this is not the case: local observables with arbitralily small compact support in space-time do exist in free open bosonic string field theory (string field theory from now on), suggesting that either this construction is not suitable to capture the extended nature of strings or at least that this characteristic has little consequence on their local behaviour.\\
The free string field being obtained by a more or less ordinary (indefinite metric) Fock construction, in section \ref{first} we give a proper formulation of first covariant {\it canonical} quantization of the open bosonic string with flat (Minkowsky) target space (simply string in the following) and of its very basic structures (compare with the standard textbook \cite{Green1987})\footnote{There are other important approaches to the quantization of strings. Besides the geometric ones (see for example \cite{Mickelsson1987},\cite{Bowick1987},\cite{Bowick1987a}), we mention the algebraic approach of Pohlmeyer and Rehren in which only gauge invariant quantities are quantized and no critical dimension appears \cite{Polmeyer1986},\cite{Polmeyer1988},\cite{Meusburger2003}. In \cite{Bahns2004} it was proved that this procedure is $not$ equivalent to canonical quantization. However, we do not regard canonical quantization (and indeed quantization in general) as a physically meaningful procedure but rather as a tool to construct reasonable quantum models. We also hope that our attempt to put the use of DDF operators on a firmer mathematical ground will help a better understanding of their relations with the Polmeyer invariants (see \cite{Schreiber}).}. While this is certainly not the first attempt in this direction (see for example \cite{Grundling1993}), one important ingredient of the standard canonical approach was still missing: DDF operators (from the names of Del Giudice, Di Vecchia and Fubini \cite{DelGiudice1972}). They are gauge invariant operators (in the sense that they commute with the constraints) and as such have proved to be a powerful tool in the construction of physical states. Acually, the very first proof of the no-ghost theorem (\cite{Brower1972},\cite{Goddard1972}) was based on this property. In subsection \ref{sub:ddf}, after dwelling in a somewhat detailed study of their mathematical properties, we follow this strategy.\\
In section \ref{field} we define and study the string field, making use of the above mentioned results on the first quantized theory to construct physical test functions (and hence observables) with arbitrarily small support.\\

\section{First quantized covariant strings}\label{first}

In this section we collect some known facts about canonical quantization of free open bosonic strings with flat Monkowski target space (simply string in the following). We follow the classical approach of \cite{Green1987}, as reformulated in \cite{Dimock2002} and \cite{Grigore2007}. The material is more or less standard, the only exception being subsection \ref{sub:ddf}. There, the transverse DDF operators are introduced and used to construct physical string states (wave functions) with particular regularity properties. We also hope to help filling what is to our knowledge a gap in the literature, where a rigourous definition of DDF operators seems to be missing (but see \cite{Grigore2007} for work in this direction).\\

\subsection{General structures}
Consider the Hilbert space $\sL^2(\bR^d)$ and the multiplication operators $p^\m$ ($\m = 0,\dots,d-1$) on it as defined on $\sS(\bR^d)$ (the Schwartz space over $\bR^d$). It is essentially self-adjoint there, but in the following we will distinguish operators and their closures only if needed. On the same domain, we can also define ``position'' operators $x_\m=-i\partial/\partial p^\m$. Moreover, consider $x^\m=\sum_\n \e^{\m\n}p_{\n}$ with the metric tensor $\e=(-1,\underbrace{1,...,1}_{d-1 \; \text{times}})$. Then on $\sS(\bR^d)$
\begin{equation}
[x^\m,p^\n]=i\e^{\m\n}
\end{equation}
For future convenience, we recall there is also an associated space-time representation, obtained by Fourier transformation. In the context of string theory, the operators $p^\m$ and $x^\m$ are costumarily identified with the string center of mass momentum and position.
\begin{Remark}
We emphasize that the operators $x^\m$ will have no role in the following: they only appear as the zero component of the Fourier decomposition of the string operator $X(\s,\t)$ (see equation \ref{stringoperators} below). It will actually be clear that both of them do $not$ commute with the constraints, so that at the quantum level they should not be considered as particularly relevant: only observables should.
\end{Remark}
\bigskip
To model out string oscillators one starts with the algebraic Fock space ${\cal F}_0$ of finite linear combinations of vector $\a^{\m_1}_{m_{1}}\dots \a^{\m_k}_{m_{k}}\Omega$, where the family of operators $\alpha^{\m}_{m}, m \in \bZ/\{0\}$ and the vector $\O$ are such that
\begin{eqnarray}\nonumber
[ \alpha^{\m}_{m}, \alpha^{\n}_{n} ] = m~\eta^{\m\n}\delta_{m+n} \cdot I& 
& \quad \forall m, n \neq 0
\nonumber \\
\alpha^{\m}_{m} \Omega = 0\qquad\quad& &\quad m > 0
\nonumber \\
(\alpha^{\m}_{m})^{\dagger} = \alpha^{\m}_{-m}\qquad& &\quad \forall m\neq 0\nonumber
\end{eqnarray}
where $\d$ is the Kronecker delta.\\
On the space ${\cal F}_0$ there are a unique indefinite hermitian form $\langle\cdot,\cdot\rangle$ such that $\langle\O,\O\rangle=1$ and $\langle\a_n^\m\phi,\psi\rangle=\langle\phi,\a_{-n}^\m\psi\rangle$ and
an associated scalar product $(\cdot,\cdot)$, given by the the formula $\langle\cdot,\cdot\rangle=(\cdot,J\cdot)$ with $J\a^{\m_1}_{m_{1}}\dots \a^{\m_k}_{m_{k}}\Omega=\e^{\m_1\m_1}\a^{\m_1}_{m_{1}}\dots \e^{\m_k\m_k}\a^{\m_k}_{m_{k}}\Omega$. Since $J^2=I$, ${\cal F}_0$ has the structure of a pre-Krein space, {\it i.e.} the topologies generated by $\langle\cdot,\cdot\rangle$ and $(\cdot,\cdot)$ are equivalent \cite{Bognar1974} and we can consider its closure $\cF$ without specifications. As usual, we denote by $A^*$ the adjoint of an operator $A$ on ${\cal F}$ with respect to this scalar product, as opposed to the symbol $A^\dagger$ by which we refer to $J$-adjoints. Thus $\langle \phi,A\psi\rangle=\langle A^\dagger \phi,\psi\rangle$, $(\phi,A\psi)=(A^* \phi,\psi)$ and $A^\dagger =J A^* J$. We stress that $J=J^\dagger=J^*$.\\
Since $\forall \psi\in \cF_0$ we have $\alpha^{\m}_{n}\psi=0$ for sufficiently large $n>0$, on the subspace $\cF_0$ we can define
\begin{equation}
\tilde{L}_{m} = \frac{1}{2} \sum_{n \in \bZ/{0}}~\eta_{\m\n}\wick{\alpha^{\m}_{m-n} \alpha^{\n}_{n}}\;\;
= \frac{1}{2} \sum_{n \in \bZ/{0}} \wick{\alpha_{m-n}\cdot \alpha_{n}}
\end{equation}
($\wick{\cdot\;}$ indicates Wick ordering) and we have $\tilde{L}_{m}\O=0$ for $m>0$. Thus, the operators $\tilde{L}_{m}$ leave this domain invariant and then simple algebra is enough to check that they satisy
\begin{equation}
[ \tilde{L}_{m}, \tilde{L}_{n} ] = (m - n) \tilde{L}_{m+n} 
+ d\frac{ m (m^{2} - 1)}{12}~\delta_{m+n}~\cdot I
\end{equation}
so that they define a lowest weight representation of the Virasoro algebra with $(c,h)=(d,0)$.
Moreover, they are closable since their adjoints $\tilde{L}^*_m =J\tilde{L}_{-m}J$ are densely defined.\\ 
Finally, a representation $u$ of the (full) $d$-dimensional Lorentz group $\Lg$ can be defined on $\cF_0$ by
\begin{equation}\label{poi}
u(\L)\a^{\m_1}_{m_{1}}\dots \a^{\m_k}_{m_{k}}\Omega=\L^{\m_1}_{\n_1}\a^{\n_1}_{m_{1}}\dots \L^{\m_k}_{\n_k}\a^{\n_k}_{m_{k}}\Omega
\end{equation}
with $\L\in\Lg$ (so that $\cF_0$ is left invariant) and we have
\begin{equation}
u(\L)^{-1}\a^{\m}_{m}u(\L)=\L^{\m}_{\n}\a^{\n}_{m}
\end{equation}
The operators $u$ leave the indefinite inner product invariant but are in general not bounded.\\
Consider now the Krein space
\begin{equation}
\cK=\sL^2(\bR^d,dp)\otimes\cF=\sL^2(\bR^d,\cF,dp)
\end{equation}
We write the induced inner product as
\begin{equation}
\langle\phi,\psi\rangle=\int_{\bR^d}\langle\phi(p),\psi(p)\rangle dp\;\;\;\;\phi,\psi\in \cK
\end{equation}
where by a slight abuse we do not distiguish the notations for the inner product on the whole space $\cK$ and the one on the single fiber $\cK_p\sim \cF$ ($p\in \bR^d$). The associated scalar product is
\begin{equation}\label{scalar}
(\phi,\psi)=\int_{\bR^d}(\phi(p),\psi(p)) dp \;\;\;\;\phi,\psi\in \cK
\end{equation}
and in the following we will indicate by $\|\cdot\|$ both the corresponding norms on $\cK$ and $\cK_p\sim \cF$ ($p\in \bR^d$). 
The two products are related by $\langle\psi,\phi\rangle=(\phi,I\otimes J\psi)$ ($\phi,\psi\in\cK$), but from now on we will write $J$ for $I\otimes J$. Also, we will not change the notation concerning adjoints.\\
On the dense domain $\sS(\bR^d)\otimes\cF_0$ we now define operators $p^\m\otimes I$, $p^\m\otimes I$ and  $I\otimes\a_n^\m$  ($n\in \bZ/\{0\}$), but continue to use the symbols $p^\m$, $x^\m$ and $\a_n^\m$ respectively.
Then, the string operators are given by
\begin{equation}\label{stringoperators}
X^\m(\s,\t)=x^\m + p_0^\m\t +i\!\!\sum_{n\in \bZ/\{0\}}\a_n^\m e^{-in\t}\frac{\cos{n\s}}{n}
\end{equation}
for $(\s,\t)\in [ 0,2\p]\times\bR$. They are operator valued distributions in the variable $\s$ and are formally satisfy $X^\m(\s,\t)=X^\m(\s,\t)^\dagger$.\\
Introducing the notation $\a_0^\m=p^\m$, the constraint operators are defined on the same domain by the formula
\begin{equation}
L_m=\frac{1}{2}\sum_{n\in\bZ}\wick{\a_{m-n}\cdot\a_n}-b\d_mI
\end{equation}
with $b\in\bR$ (we introduce the standard string theoretic modification of the operator $L_0$) and satisfy
\begin{eqnarray}
L^{\dagger}_m&=&L_{-m}\\
\left[ L_{m}, L_{n} \right] = (m - n) L_{m+n} 
\!&+&\! \begin{bmatrix}d\frac{ m (m^{2} - 1)}{12}+2bm\end{bmatrix}\delta_{m+n}~\cdot I
\end{eqnarray}
for all $m\in\bZ$.
We note for future convenience that, introducing the operators
\begin{equation}\label{N}
N=\frac{1}{2} \sum_{n\in\bZ/\{0\}}\wick{\a_{-n}\cdot\a_n}
\end{equation}
and $M^2=2(N-a)$ (as defined on $\sS(\bR^d)\otimes\cF_0$), we can write $2L_0=(p^2 +M^2)$. In the space-time representation this reads $-\Box +M^2$, the Klein Gordon operator for $\cF$-valued functions. Observing that $JNJ=N$ (so that adjoints and $J$-adjoints coincide for $N$) and recalling that the operators $\tilde{L}_m$ ($m\in\bZ$) are closable, a slight modification of the aguments of (\cite[Lemma 5]{Dimock2002}) gives the following
\begin{lemma}\label{lem:mass}
The operators $L_m$ ($m\in\bZ$) are closed on the domain
\begin{multline}
D(L_m)= \{ \psi\in \cK : \psi(p)\in D(L_m(p))\quad \text{a.e.}\; p,\\ \int_{\bR^d}\|L_m(p)\psi(p)\|^2dp <\infty\}
\end{multline}
The operators $N$ and $M^2$ are self-adjoints on the domain
\begin{equation}
D(N)= \{ \psi\in \cK : \psi(p)\in D(N) \quad \text{a.e.}\; p, \int_{\bR^d}\|N\psi(p)\|^2dp <\infty \}
\end{equation}
and their spectrum is discrete with finite multiplicity. In particular, the spectrum of $M^2$ is $\{2(n-b), n\geq0\}$.
\end{lemma}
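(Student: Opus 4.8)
The plan is to use the direct–integral structure of $\cK$ over momentum space to reduce the whole statement to a single fibre $\cF$, and there to exploit the grading of $\cF$ by the eigenspaces of $N$. Write $\cK=\int_{\bR^d}^{\oplus}\cK_p\,dp$ with $\cK_p\simeq\cF$. On $\sS(\bR^d)\otimes\cF_0$ all the operators involved are decomposable: using $\a_0=p$ one computes $L_m=\int^{\oplus}L_m(p)\,dp$ with $L_m(p)=\tilde L_m+p\cdot\a_m$ for $m\neq0$ and $L_0(p)=\tfrac12 p^2+N-b$, while $N$ and $M^2=2(N-b)$ are constant fibred operators. Since $p\mapsto L_m(p)\phi$ is polynomial in $p$ for $\phi\in\cF_0$, these are measurable operator fields, and by the standard direct–integral calculus (the one used in \cite[Lemma 5]{Dimock2002}) a direct integral of closed, resp.\ self-adjoint, fibre operators taken with the domain $\{\psi:\psi(p)\in D(\,\cdot\,)\ \text{a.e.},\ \int\|\,\cdot\,\psi(p)\|^2\,dp<\infty\}$ is again closed, resp.\ self-adjoint, with spectrum the essential union of the fibre spectra. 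So everything reduces to a single copy of $\cF$.

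On $\cF$, set $\cF^{(k)}=\Span\{\a^{\m_1}_{-n_1}\cdots\a^{\m_j}_{-n_j}\Omega:n_1+\cdots+n_j=k\}$, so that $\cF=\overline{\bigoplus_{k\ge0}\cF^{(k)}}$ and every $\cF^{(k)}$ is finite-dimensional. From $[N,\a^{\m}_{n}]=-n\,\a^{\m}_{n}$ one gets that $N$ acts on $\cF^{(k)}$ as $k\,I$; hence $N$ is essentially self-adjoint on $\cF_0$, its closure is $\bigoplus_k k\,P_k$ with $P_k$ the orthogonal projection onto $\cF^{(k)}$, its domain is $\{\phi:\sum_k k^2\|P_k\phi\|^2<\infty\}$, and $\sigma(N)=\{0,1,2,\dots\}$ with finite multiplicities $\dim\cF^{(k)}$. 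Consequently $M^2=2(N-b)$ is self-adjoint on the same domain with $\sigma(M^2)=\{2(n-b):n\ge0\}$ — this last identity being exactly where the ``$n\ge0$'' bound on $M^2$ comes from. Lifting back through the fibration gives self-adjointness of $N$ and $M^2$ on the domains of the statement, together with (fibrewise) the asserted discreteness, finite multiplicity and value of $\sigma(M^2)$.

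For the closedness of $L_m$ I would again argue fibrewise. Since $[N,\tilde L_m]=-m\,\tilde L_m$ and $[N,\a_m]=-m\,\a_m$, the fibre operator $L_m(p)$ maps $\cF^{(k)}$ into $\cF^{(k-m)}$; writing $\phi=\sum_k P_k\phi$, each graded component of $L_m(p)\phi$ is therefore determined by one finite-dimensional component of $\phi$ (for $m\neq0$ the summands land in mutually orthogonal subspaces). A routine argument — continuity of the $P_k$ and finite-dimensionality of the blocks — then shows that $L_m(p)$ is closable, that its closure acts componentwise with $\cF_0$ as a core, and that its domain is exactly $\{\phi\in\cF:\sum_k\|L_m(p)P_k\phi\|^2<\infty\}$ (for $m=0$ this closure is just the self-adjoint operator $\tfrac12 p^2+\overline N-b$). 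Substituting this $D(L_m(p))$ into the direct-integral description yields the domain $D(L_m)$ of the statement and the closedness of $L_m$, which then coincides with the closure of $L_m$ as initially defined on $\sS(\bR^d)\otimes\cF_0$.

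The only step that calls for genuine care rather than bookkeeping is this last identification: one must check that the ``$\ell^2$-in-the-grading'' domain is indeed closed and is actually attained by the closure of the restriction of $L_m(p)$ to $\cF_0$, and do so uniformly in $m$ — in particular for the momentum term $p\cdot\a_m$, which does not appear in the pure-oscillator Virasoro operators $\tilde L_m$ whose closability is already known. Granting that, the measurability of the fields and the direct-integral gluing are standard, and the only computation is the combinatorics of $d$-coloured partitions underlying the finiteness of $\dim\cF^{(k)}$.
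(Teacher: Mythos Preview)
Your proposal is correct and follows essentially the same route as the paper, which does not give a self-contained proof but simply invokes ``a slight modification of the arguments of \cite[Lemma~5]{Dimock2002}'' after recording that $JNJ=N$ and that the $\tilde L_m$ are closable; your direct-integral-plus-grading argument is precisely that modification spelled out in detail. The one point worth adding, to match the paper's emphasis, is the observation $JNJ=N$: it guarantees that the Hilbert-space self-adjointness you establish for $N$ and $M^2$ coincides with $J$-self-adjointness in the ambient Krein structure, so the statement is unambiguous.
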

\bigskip
Finally, there is a representation $U$ of the full Poincar\'e group $\sP$ defined on $\sS(\bR^d)\otimes\cF_0$ by
\begin{equation}
(U(a,\L)\psi)(p)=e^{-ip\cdot a}u(\L)\psi(\L p)
\end{equation}
with $a\in\bR^d$ and $\L\in\Lg$.
This preserves the indefinite inner product by the Lorentz invariance of Lebesgue measure. Notice that, for $n\in\bZ$ and on $\sS(\bR^d)\otimes\cF_0$,
\begin{equation}\begin{split}\nonumber
U(a,\L)^{-1}x^\m U(a,\L)=x^\m+a^\m\\
U(a,\L)^{-1}\a^\m_n U(a,\L)=\L_\n^\m\a_n^\n
\end{split}\end{equation}
It follows that on the same domain
\begin{equation}\begin{split}\nonumber
U(a,\L)^{-1}L_mU(a,\L)=L_m \qquad\qquad\qquad \forall m\in\bZ\\
U(a,\L)^{-1}X^\m(\s,\t)U(a,\L)=\L^\m_\n X^\n(\s,\t) + a^\m 
\end{split}\end{equation}
\subsection{Imposing the constraints. The physical theory}\label{sub:constraints}
As it is well known \cite{Green1987}, invariance of the quantum string with respect to world-sheet reparametrizations is imposed in the weak form
\begin{eqnarray}
L_m\psi=0 \;\;\;\; m\geq0
\end{eqnarray}
and for example, according to the discussion after equation (\ref{N}), the constraint $L_0\psi=0$ is equivalent 
to the equation $(p^2 +M^2)\psi=0$. Not even this condition has any chance to be met because of the discreteness of the spectrum of $M^2$. As a consequence, the Hilbert space has to be ``reconfigured'' so that this becomes possible: once more, we follow \cite{Dimock2002}.
There, it was proven that setting
\begin{equation}\nonumber
V_r=\{ p\in\bR^d/\{0\} : p^2+r=0\} 
\end{equation}
it holds (as Hilbert spaces)
\begin{equation}
\cK=\sL^2(\bR^d,\cF, dp)= \int^{\otimes}\sL^2(V_r,\cF, d\m_r(\bp))dr=\int^{\otimes}\cK_r dr
\end{equation}
where $\m_r(\bp)$ indicates the Lorentz invariant measure on it and we put $\cK_r=\sL^2(V_r,\cF, d\m_r(\bp))$.
This means (see for example \cite{Naimark1972})
\begin{equation}
\int_{\bR^d}\|\psi(p)\|dp=\int_{-\infty}^{\infty} \left(\int_{V_r}\|\psi(\orp,\bp)\|^2d\m_r(\bp)\right) dr
\end{equation}
where identification of $\psi\in\cK$ with $r\rightarrow\psi_r$ is understood ($\psi_r$ is the restriction of $\psi$ to $V_r$). Since
\begin{equation}
\langle\psi,\phi\rangle=\int_{-\infty}^{\infty}\langle\psi_r,\phi_r\rangle dr
\end{equation}
the previous decomposition is also a decomposition of Krein spaces. Correspondingly, all the operators considered so far can be written as direct integrals of their restrictions to the $V_r$'s.\\
Indicating with  $V^+_r$ the $p^0>0$ half of $V_r$, we make the following
\begin{definition}\label{singlestring}
\begin{enumerate}
\item
The Krein space for the single string is 
\begin{equation}\begin{split}
\cH  =&  \bigoplus_{r= -2,0,2,... } \ \   \cK_r  \\
 \cK_r =& \sL^2(V^{(+)}_r, \cF,  d\m_r(\bp)) \\
\end{split}
\end{equation}
where   $  V^{(+)}_r = V_r^+$ for  $r \geq 0$ and   $  V^{(+)}_r = V_r$ for  $r < 0$ .
\item For   
$\psi, \chi \in  \cH$ with restrictions $\psi_r, \chi_r$, an indefinite inner product is defined by  
\begin{equation}  
\langle\psi , \chi\rangle = \sum_{r= -2,0,2,... }  \langle\psi_r, \chi_r\rangle     
\end{equation} 
\item A representation of the Poincar\'e group $\sP$
is defined by 
\begin{equation}
U(a,\L) =  \bigoplus_{r= -2,0,2,... }  U_r(a, \L)  
\end{equation}
\item The constraint operators are defined by   
\begin{equation}
\begin{split}
L_0 = & \bigoplus_{r= -2,0,2,... }  \frac12 (-r + M^2) \\
L_m  = & \bigoplus_{r= -2,0,2,... } L_{r,m} \\
\end{split}
\end{equation}
\end{enumerate}
\end{definition}
\bigskip
As expected, the operators $L_m$ are closed on the domain
\begin{multline}
D(L_m)= \{ \psi\in \cH : \psi(p)\in D(L_m(p)) \quad a.e. \; p,\\ \sum_{r= -2,0,2,... }\int_{\bR^d}\|L_m(\orp,\bp)\psi(\orp,\bp)\|^2d\m_r(\bp) <\infty\}
\end{multline}
while $N$ (and hence $M^2$) are selfadjoint and $J$-selfadjoint on
\begin{multline}
D(N)= \{ \psi\in \cH : \psi(p)\in D(N)\quad a.e. \; p,\\ \sum_{r= -2,0,2,... }\int_{\bR^d}\|N\psi(\orp,\bp)\|^2d\m_r(\bp) <\infty\}
\end{multline}
With this preparation, we can now define the physical state space of the string. First, set
\begin{equation}\begin{split}
\cH^{\prime}&=\sum_{r= -2,0,2,... }\cK_r^{\prime}\\
\cK_r^{\prime}=\{\psi\in\cK_r^{\prime} &: (-r+M^2)\psi=0,\quad L_m\psi=0 \quad m>0\}
\end{split}
\end{equation}
The subspace of isotropic (or spurious) elements of $\cH^{\prime}$ is defined by 
\begin{equation}\begin{split}
\cH^{\prime\prime}=\cH^{\prime}\cap&(\cH^{\prime})^{\perp}=\sum_{r= -2,0,2,... }\cK_r^{\prime\prime}\\
\cK_r^{\prime\prime}=&\cK_r^{\prime}\cap(\cK_r^{\prime})^{\perp}
\end{split}
\end{equation}
where orthogonality is in terms of the indefinite inner product. Then, if $\psi\in\cH^{\prime\prime}$ one has $\langle\psi,\psi\rangle=0$. The physical Hilbert space is defined as
\begin{equation}\begin{split}
\cH^{phys}=\cH^{\prime}/\cH^{\prime\prime}&=\sum_{r= -2,0,2,... }\cH_r^{phys}\\
\cH_r^{phys}=&\;\cK_r^{\prime}/\cK_r^{\prime\prime}
\end{split}
\end{equation}
The fact that the indefinite inner product restricts to a positive definite one on $\cH^{phys}$ when $d=26$ and $b=1$ is the content of the famuous no-ghost theorem of Brower \cite{Brower1972}, Goddard and Thorn \cite{Goddard1972}, as reformulated in \cite{Dimock2002}.

\subsection{Physical states and DDF operators}\label{sub:ddf}

The canonical quantization of the open string has been carried out in some detail in the previous subsection. However, in view of passing to second quantization ({\it i.e.} to string field theory) a closer look to the first quantized state space and especially to observable states is in order (see the next section). As already anticipated, our main tool will be the DDF operators introduced in \cite{DelGiudice1972} as reworked in \cite{Grigore2007}. Even in this last reference, though, a complete treatment of the string center of mass position and momentum is lacking and (modified) DDF operators are defined only for fixed strictly positive fixed masses (with the notation of subsection \ref{sub:constraints}, this means taking a positive fixed value of the mass parameter $r$). In what follows we will introduce DDF operators on the whole $\cK=\sL^2(\bR^d,\cF, dp)$. The definition of their restrictions to $\cH$ can easily be obtained as in subsection \ref{sub:constraints}.\\
Let us begin by observing that the expressions
\begin{equation}\begin{split}
U_{0}(k)&=I \\
U_{n}(k) = \sum_{p \geq 0} \frac{1}{p!} \sum_{n_{1},\dots,n_{p} > 0}
\sum_{n_{1}+\cdots +n_{p}=n}~&\frac{1}{ n_{1} \dots n_{p}} 
(k\cdot \alpha_{n_{1}}) \cdots (k\cdot \alpha_{n_{p}}) \;\;\;n>0
\end{split}
\end{equation}
are well defined on the algebraic Fock space $\cF_0\subset \cF$ and leave it invariant. Moreover, we see that 
for $\psi \in {\cal F}_{0}$
\begin{equation}\label{finite1}
U_{n}(k) \psi = 0
\end{equation}
for $n$ sufficiently large. Then, the same properties hold for
\begin{equation}\label{finite2}\begin{split}
V_{n}(k) =& \sum_{p \in \bZ} U_{p-n}(-k)^{\dagger}~U_{p}(k)  \\
\bar{V}^{\mu}_{n}(k) = \sum_{p > 0} &[ \alpha^{\mu}_{-p} V_{n+p}(k) + V_{n-p}(k) \alpha^{\mu}_{p} ]
\end{split}\end{equation}
with $n\in\bZ$.\\
We now go over to the space $\cK=\sL^2(\bR^d,\cF, dp)$ and define on the invariant domain $\sS(\bR^d)\otimes\cF_0$ the operators
\begin{equation}\label{finite3}
V^{\mu}_{n}(k) = \bar{V}^{\mu}_{n}(k) + p^{\mu}~V_{n}(k)
\end{equation}
Then on the same domain, under the hypothesis that $k^2=0$ ({\it i.e.} the vector $k\in\bR^d$ is light-like), the following relations hold true:
\begin{align}\label{ddf1}
[ V^{\mu}_{m}(mk), V^{\n}_{n}(nk) ] = - \eta^{\mu\nu}&m~\delta_{m+n}\cdot I
+ k^{\mu}~V^{\nu}_{m,n}(k) - k^{\nu}~V^{\mu}_{n,m}(k)\nonumber\\
[ L_{0}, V^{\mu}_{n}&(nk) ] = -nV^{\mu}_{n}(nk)\\
[ L_{m}, V^{\mu}_{n}(nk) ] = - n~(1 + k\cdot p)&V^{\mu}_{m+n}(nk)
+ \frac{m(m-1)}{2}nk^{\mu}V_{m+n}(nk)\quad m\neq0\nonumber
\end{align}
where the explicit expressions of the terms $V^{\nu}_{m,n}(k)$ are not important since it is clear they leave $\sS(\bR^d)\otimes\cF_0$ invariant.\\
Observe now that if $k^i=0$ for $i=1,\cdots,d-2$ and $k\cdot p=-1$ the relations (\ref{ddf1}) greatly simplify. This motivates introducing, on the domain
\begin{equation}
\cD_0^{p^+}=\{f\in\sS(\bR^d) : \lim_{(p^0+p^{d-1})\to0}(p^0+p^{d-1})^{-\g} f(p)=0 \;\;\text{for any}\;\; \g\in\bN \},
\end{equation}
the operators $k^\m(p)$ ($\m=0,\cdots,d$) given by
\begin{equation}\label{cappa}\begin{split}
k^\m(p)=0\quad\quad\quad\quad \;\;\;\;\;&\quad p\in\bR^d, \quad\m=1\cdots,d-2\\
k^0(p)=-k^{d-1}(p)=\frac{1}{2(p^0+p^{d-1})}\;\;\;\;\;&\quad p\in\bR^d,\quad p^0+p^{d-1}\neq0\\
k^0(p)=k^{d-1}(p)=0\;\;\;\;\;\qquad\quad&\quad p\in\bR^d,\quad p^0+p^{d-1}=0
\end{split}\end{equation}
We are now ready for the following
\begin{definition}
The operators $A^i_n$ ($i=1\cdots,d-2$) on $\cK=\sL^2(\bR^d,\cF, dp)=\sL^2(\bR^d,dp)\otimes\cF$ given on each fiber $\cK_p\sim\cF$ ($p\in\bR^d$) by
\begin{equation}\begin{split}
A^i_n(p)&=V^i_n(nk(p)) \;\;\;\;\;\;\text{if}\quad k(p)\neq0\\
&A^i_n(p)=0 \;\;\;\;\,\quad\quad\text{if}\quad k(p)=0
\end{split}\end{equation}
are called transverse DDF operators.
\end{definition}
\bigskip
We collect their main properties in the following
\begin{theorem}\label{thm:ddf}
The transverse DDF operators $A^i_n$ ($i=1\cdots,d-2$) are are well defined on the invariant domain $\cD_0^{p^+}\otimes\cF_0$ and are closed operators on the domain
\begin{equation}\nonumber
D(A^i_n)= \{ \psi\in \cK : \psi(p)\in D(A^i_n(p))\; a.e.\; p, \int_{\bR^d}\|A^i_n(p)\psi(p)\|^2dp <\infty\}
\end{equation}
Moreover, since $L_{m}(\cD_0^{p^+}\otimes\cF_0)\subset\cD_0^{p^+}\otimes\cF_0$ for every $m\in\bZ$, there holds
\begin{align}\label{ddf2}
[ A^{i}_{n}, A^{j}_{m} ] &= \delta_{ij}~m~\delta_{m+n}\cdot I
\nonumber \\
~[ L_{m}, A^{i}_{n} ] &= 0, \quad\quad\quad\qquad\qquad\qquad\qquad m \not= 0 \quad \nonumber \\
~[ L_{0}, A^{i}_{n} ] &= - n~A^{i}_{n}\quad\quad\quad\\
(A_{n}^{i})^{\dagger} &= A^{i}_{-n}\quad\quad\quad\quad
\nonumber \\
A^{i}_{n} \Omega &= 0 \quad\quad\quad\qquad\qquad\qquad\qquad\; n > 0\quad
\nonumber \\
A^{i}_{0} &= p^{i}\quad\quad\quad\quad\quad\nonumber
\end{align}
\end{theorem}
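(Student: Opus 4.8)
\emph{Sketch of proof.}
Each $A^i_n$ is obtained from the operators $V^\mu_n(k)$ of (\ref{finite3}) by substituting for the reference light-like vector $k$ the ($p$-dependent, hence \emph{decomposable}) multiplication operator $k(p)$; concretely $A^i_n(p)=V^i_n(nk(p))$. The proof therefore has two essentially independent halves: (a) show that this (singular) substitution produces a well defined, densely defined, closed operator on $\cK$ admitting $\cD_0^{p^+}\otimes\cF_0$ as an invariant domain, and (b) read off the relations (\ref{ddf2}) fibre by fibre from (\ref{ddf1}). Throughout we use that, by (\ref{finite1})--(\ref{finite2}), on $\cF_0$ each $U_n(k)$, $V_n(k)$, $\bar V^\mu_n(k)$ is a \emph{finite} sum of monomials in the $\a^\mu_{\pm q}$ ($q>0$) whose coefficients are polynomials in the components of $k$, and that $V^\mu_n(k)=\bar V^\mu_n(k)+p^\mu V_n(k)$ only adjoins to these the multiplication operators $p^\mu$.

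\textbf{(a) Admissibility of the substitution.}
Off the null hyperplane $H=\{p\in\bR^d:p^0+p^{d-1}=0\}$ the only components of $k(p)$ that do not vanish are $k^0(p)=-k^{d-1}(p)$, a constant multiple of $(p^0+p^{d-1})^{-1}$. The crucial point is that $\cD_0^{p^+}$ is a dense subspace of $\sL^2(\bR^d)$ — dense because it contains $C^\infty_c(\bR^d\setminus H)$ — which is stable both under multiplication by the $p^\mu$ and under multiplication by the $k^\mu(p)$. Stability under $p^\mu$ is immediate: for $f\in\cD_0^{p^+}$ the function $p^\mu(p^0+p^{d-1})^{-\gamma}f=(p^0+p^{d-1})\cdot\bigl(p^\mu(p^0+p^{d-1})^{-\gamma-1}f\bigr)$ is a polynomial times a Schwartz function, hence bounded, hence tends to $0$ as $p^0+p^{d-1}\to0$. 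Stability under $k^\mu(p)$ rests on the elementary division lemma: if $f\in\sS(\bR^d)$ vanishes to infinite order on $H$ in the $(p^0+p^{d-1})$ direction — exactly the defining property of $\cD_0^{p^+}$ — then $(p^0+p^{d-1})^{-1}f\in\sS(\bR^d)$ and again lies in $\cD_0^{p^+}$; smoothness follows from Taylor's formula with integral remainder in the $(p^0+p^{d-1})$ variable, and the Schwartz seminorm estimates follow by handling the region $|p^0+p^{d-1}|\ge1$ (where the quotient and its derivatives are controlled by those of $f$) and the complementary slab (where one uses the uniform vanishing on $H$) separately. Since $L_m$ only inserts polynomials in $p=\a_0$ of degree $\le2$, this gives $L_m(\cD_0^{p^+}\otimes\cF_0)\subset\cD_0^{p^+}\otimes\cF_0$ for every $m\in\bZ$, as asserted in the statement. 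Finally, for $f\in\cD_0^{p^+}$ and $\psi\in\cF_0$ the vector $A^i_n(p)\bigl(f(p)\otimes\psi\bigr)=V^i_n(nk(p))\bigl(f(p)\otimes\psi\bigr)$ is a finite sum $\sum_\a g_\a(p)\otimes\psi_\a$ with $g_\a\in\cD_0^{p^+}$ and $\psi_\a\in\cF_0$, so $A^i_n$ is well defined on $\cD_0^{p^+}\otimes\cF_0$, leaves it invariant, and maps it into $D(A^i_n)$.

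\textbf{(b) Closedness and the relations.}
For $p\notin H$ the adjoint computation below gives $\langle\phi,A^i_n(p)\chi\rangle=\langle A^i_{-n}(p)\phi,\chi\rangle$ on $\cF_0$, so the $J$-adjoint of $A^i_n(p)$ — hence its adjoint — is densely defined and $A^i_n(p)$ is closable; we write $A^i_n(p)$ also for its closure. The coefficients of $A^i_n(p)$ on the fixed core $\cF_0$ are smooth on $\bR^d\setminus H$, so $\{A^i_n(p)\}$ (set equal to $0$ on the null set $H$) is a measurable field of closed operators and $A^i_n=\int^{\otimes}A^i_n(p)\,dp$; arguing as in the proof of Lemma~\ref{lem:mass} (cf. \cite{Naimark1972},\cite{Dimock2002}), this direct integral is closed on the stated domain $D(A^i_n)$, and densely defined since $\cD_0^{p^+}\otimes\cF_0$ is dense in $\cK$. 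For the relations (\ref{ddf2}) it now suffices to argue on $\cF_0$ at a fixed $p\notin H$ — on $H$ all left-hand sides vanish by definition — because every operator involved is decomposable and preserves $\cD_0^{p^+}\otimes\cF_0$. There $k(p)$ is a fixed light-like vector with $k^i(p)=0$ for $i=1,\dots,d-2$, normalised so that the factor $1+k(p)\cdot p$ occurring in the last line of (\ref{ddf1}) vanishes; substituting $k=k(p)$ into (\ref{ddf1}) therefore annihilates every term containing a transverse component $k^i$ or the factor $1+k\cdot p$, giving at once $[A^i_n(p),A^j_m(p)]=-\eta^{ij}n\,\delta_{m+n}=\delta_{ij}\,m\,\delta_{m+n}$, $[L_m(p),A^i_n(p)]=0$ for $m\neq0$, and $[L_0(p),A^i_n(p)]=-n\,A^i_n(p)$. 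From $U_n(0)=\delta_{n,0}I$ one finds $V_0(0)=I$ and $\bar V^i_0(0)=0$, whence $A^i_0(p)=V^i_0(0)=p^i$. For $n>0$, $U_q(k)\Omega=\delta_{q,0}\Omega$ and $\a^i_q\Omega=0$ ($q>0$) give $V_n(k)\Omega=0$ and $\bar V^i_n(k)\Omega=0$, so $A^i_n(p)\Omega=0$. Finally the change of index $q=p-n$ in the defining sum of $V_n(k)$ yields $\bigl(V_n(k)\bigr)^\dagger=V_{-n}(-k)$, which together with $(\a^\mu_q)^\dagger=\a^\mu_{-q}$ and the commutativity of $p^\mu$ with the Fock operators gives $\bigl(V^\mu_n(k)\bigr)^\dagger=V^\mu_{-n}(-k)$; taking $k=nk(p)$ this is $\bigl(A^i_n(p)\bigr)^\dagger=A^i_{-n}(p)$. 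Assembling these fibre identities over $p$ (using that $J=I\otimes J$ is decomposable) proves (\ref{ddf2}).

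\textbf{The main obstacle.}
The substantive step is (a), the admissibility of the substitution $k\mapsto k(p)$: the vector $k(p)$ is genuinely singular along $H$, and the construction only gets off the ground because the domain $\cD_0^{p^+}$ is tailor-made so that the division lemma applies, i.e. so that multiplication by $(p^0+p^{d-1})^{-1}$ leaves it invariant. Once this is in hand the relations (\ref{ddf2}) are a fibre-wise bookkeeping exercise resting on the already established (\ref{ddf1}), and the closedness is precisely the direct-integral argument already carried out for Lemma~\ref{lem:mass}.
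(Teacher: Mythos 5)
Your proposal is correct and takes essentially the same route as the paper: well-definedness and invariance of $\cD_0^{p^+}\otimes\cF_0$ because that domain absorbs the singular factors $(p^0+p^{d-1})^{-\gamma}$ coming from $k(p)$ (the paper shows this by writing out the explicit finite-sum expansion rather than isolating a division lemma), closedness via fibrewise closability (the densely defined adjoint $JA^i_{-n}(p)J$) together with the direct-integral/a.e.-subsequence argument, and the relations (\ref{ddf2}) read off fibre by fibre from (\ref{ddf1}) using invariance of $\cD_0^{p^+}\otimes\cF_0$ under the $L_m$. You merely spell out details the paper treats as clear, such as the stability of $\cD_0^{p^+}$ under the singular multiplication and the explicit verification of the last three relations in (\ref{ddf2}).
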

\begin{proof}
First, observe from equations (\ref{finite1}),(\ref{finite2}) and (\ref{finite3}) that application of any of the $A^i_n$'s to vectors $f\otimes\chi\in\cD_0^{p^+}\otimes\cF_0$ results in a {\it finite} sum of terms of the form
\begin{gather}\nonumber
K\cdot [(p^{i_{1}}\cdots p^{i_{l}})\frac{1}{(p^0+p^{d-1})^\g} (k(p)\cdot \alpha_{-n_{1}}) \cdots\\ 
\nonumber \cdots(k(p)\cdot \alpha_{-n_{q}})\times(k(p)\cdot \alpha_{n_{1}^\prime}) \cdots (k(p)\cdot \alpha_{n_{s}^\prime})](f(p)\otimes \chi)=\\
\nonumber =K\cdot [ (p^{i_{1}}\cdots p^{i_{l}})\frac{1}{(p^0+p^{d-1})^{\g^\prime}} (\alpha_{-n_{1}}^0 +\alpha_{-n_{1}}^d) \cdots \\
\nonumber \cdots(\alpha_{-n_{q}}^0 +\alpha_{-n_{q}}^d)\times(\alpha_{n_{1}}^{0}+\alpha_{n_{1}}^d) \cdots (\alpha_{n_{s}}^0 +\alpha_{n_{s}}^d)](f(p)\otimes \chi)=\\
=K\cdot [(p^{n_{1}}\cdots p^{n_{l}}) \frac{1}{(p^0+p^{d-1})^{\g^\prime}} f(p)]\otimes [(\alpha_{-n_{1}}^0 +\alpha_{-n_{1}}^d) \cdots\\
\nonumber \cdots(\alpha_{-n_{q}}^0 +\alpha_{-n_{q}}^d)\times (\alpha_{n_{1}^\prime}^{0}+\alpha_{n_{1}^\prime}^d) \cdots (\alpha_{n_{s}^\prime}^0 +\alpha_{n_{s}^\prime}^d)\chi]
\end{gather}
where $l,q,s,\g,\g^\prime\in\bN$ and $K\in\bC$ are constants depending on $\chi\in\cF_0$ and the $A^i_n$'s, $n^1,\cdots,n^q,n^\prime_1,\cdots,n^\prime_s\in\bN$ and $i_1,\cdots,i_l=1,\cdots,d-2$. Since $f\in\cD_0^{p^+}$, it is then clear that this is well defined and that $\cD_0^{p^+}\otimes\cF_0$ is invariant. Moreover, $\cD_0^{p^+}\otimes\cF_0$ is dense (this last statement follows for example from density of $\sS(\bR^d)$ in $\sL^2(\bR^d)$ and absolute continuity of the Lebesgue measure \cite{Reed1980}).\\
Closure of $A^i_n$ on $D(A^i_n)$ can be proved as follows. To begin with, take $\psi_l\in D(A^i_n)$ so that $\|\psi_l-\psi\|\to0$ and $\|A^i_n\psi_l-\phi\|\to0$ and observe that by the completeness of $\sL^2(\bR^d,dp)\otimes\cF$ there is a subsequence $\psi_{l_k}$ such that almost everywhere we have $\|\psi_{l_k}(p)-\psi(p)\|\to0$ and $\|A^i_n(p)\psi_{l_k}(p)-\phi(p)\|\to0$. Since $A^i_n(p)^*=JA^i_{-n}(p)J$ is densely defined, the operator $A^i_n(p)$ is closable and $\phi(p)\in D(A^i_n(p))$ with $A^i_n(p)\psi(p)=\phi(p)$ almost everywhere.\\
The commutation relations in (\ref{ddf2}) follow from the ones in (\ref{ddf1}) and the fact that $\cD_0^{p^+}\otimes\cF_0$ is invariant for the operators $L_m$.
\end{proof}
\bigskip

\begin{Remark}
\begin{enumerate}
\item
We observe that the domain $\cD_0^{p^+}\otimes\cF_0$ is $not$ left invariant by the representation $U(a,\L)$ of the Poincar\'e group, nor is $D(A^i_n)$.
\item
The transverse DDF operators are the $z$-independent component of the ``vertex'' operator
\begin{equation} 
\dot{\widetilde{X}^{i}}(z) e^{ik\cdot \widetilde{X}(z,nk)}
\end{equation}
where $i=1,\cdots,d-2$ and
\begin{equation}
\widetilde{X}^{\mu}(z) = \sum_{n \neq 0} \frac{1}{n} \alpha^{\mu}_{n}~z^{n} + p^{\mu}~ln(z)
\qquad
\dot{\widetilde{X}^{i}}(z) = \sum_{n \neq 0} \alpha^{i}_{n}~z^{n} + p^{i}
\end{equation}
This is fundamentally different from the classical definition (see \cite{Green1987} but compare also with \cite{Buchholz1988}), since our ``modified'' DDF operators leave the single fiber $\cK_p$ of $\cK$ invariant and do not commute with $L_0$. This last fact also implies that DDF states ({\it i.e.} states of the form $\psi=A^{i_1}_{m_{1}}\dots A^{i_k}_{m_{k}}(f\otimes\Omega)$) do not necessarily satisfy the Klein-Gordon constraint $L_0\psi=0$ neither on $\cK$ nor on the reconfigured $\cH$ even if $f\otimes\Omega$ does. We will see in the next section that this is actually not a problem in the derivation of physical test functions for the string field.
\end{enumerate}
\end{Remark}
\bigskip
In particular, we have for $f\in\cD_0^{p^+}$
\begin{equation}\label{constraintsf}\begin{split}
L_mA^{i_1}_{-n_{1}}\dots A^{i_k}_{-n_{k}}&(f\otimes\Omega)=0\qquad m>0\\
L_0A^{i_1}_{-n_{1}}\dots A^{i_k}_{-n_{k}}(f\otimes\Omega)=&\frac{1}{2}(p^2-2b+2\sum_{l=1}^{k}n_l)(f\otimes\Omega)
\end{split}\end{equation}
where $n_1,\cdots, n_k >0$. Moreover, for future convenience we observe that together with the equality $M^2=2(N-b)$ this implies
\begin{equation}\label{salvezza1}
M^2A^{i_1}_{-n_{1}}\dots A^{i_k}_{-n_{k}}(f\otimes\Omega)=2(-b+\sum_{l=1}^{k}n_l)(f\otimes\Omega)
\end{equation}
so that, if $P_r$ indicates the projection on the subspace $M^2=r=-2,0,2,\cdots$ and $\bar{n}=\sum_{l=1}^{k}n_l$, we have
\begin{equation}\label{salvezza2}
P_rA^{i_1}_{-n_{1}}\dots A^{i_k}_{-n_{k}}(f\otimes\Omega)=\d_{r,2(-b+\bar{n})}A^{i_1}_{-n_{1}}\dots A^{i_k}_{-n_{k}}(f\otimes\Omega)
\end{equation}
We now want to show that the domain $D(A^i_n)$ always contains functions with suitable regularity properties.
From the proof of theorem \ref{thm:ddf} it is clear that if $g$ is such that there is a large enough positive integer, say $\tilde{\g}$, with $\lim_{p^0+p^1\to0}(p^0+p^1)^{-\tilde{\g}}g(p)=0$ then $g\otimes \chi \in D(A^i_n)$ for $\chi\in\cF_0$. Moreover, it is clear that choosing $g$ with a suitable $\tilde{\g}$, we can also construct vectors in the domain of any finite product of DDF operators. Moreover, taking into account that $\tilde{L}_m\chi=0$ for $\chi\in\cF_0$  and $m\in\bN$ sufficiently large and that there always exists a positive integer $\b$ (depending on $m$ and $\chi$) such that
\begin{equation}
\tilde{L}_{m}\chi = \frac{1}{2} \sum_{n \in \bZ/{0}, |n|\leq \b}:\alpha_{m-n}\cdot \alpha_{n}:\chi
\end{equation}
we see that we can choose $\tilde{\g}$ so that the constraints equations make sense toghether with the commutation relations (\ref{ddf2}).
We are now ready for the following
\begin{proposition}\label{compact}
Fix $\chi\in\cF_0$ and any finite product of transverse DDF operators $A^{i_1}_{-n_{1}}\dots A^{i_k}_{-n_{k}}$ ($n_1,\cdots, n_k >0$). Then there are functions $g$ such that:
\begin{enumerate}
\item\label{item0}
there holds $g\otimes\chi\in D(A^{i_1}_{-n_{1}}\dots A^{i_k}_{-n_{k}})$
\item\label{item1}
the vectors $g\otimes\chi$ and $\hat{\psi}=(A^{i_1}_{-n_{1}}\dots A^{i_k}_{-n_{k}})(g\otimes\chi)$ are in $D(L_m)$ for all $m\in\bN$.
\item\label{item2}
the vector $\hat{\psi}$ can be written as a finite sum of terms of the type $g_i\otimes\chi_i$, where $\chi_i\in\cF_0$ and each $g_i$ is the restriction of an entire analytic function $g_i(\xi)$ ($\xi\in\bC^d$) satisfying the bounds of the Paley-Wiener theorem with one and the same $R$ (see below). Thus $\hat{\psi}$ is the Fourier transform of a function $F\in C_0^{\infty}(\bR^d,\cF)$, the space of smooth functions with compact support from $\bR^d$ to $\cF$.
\end{enumerate}
\end{proposition}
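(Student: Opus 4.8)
The plan is to build $g$ from a Paley--Wiener function multiplied by a high enough power of $p^0+p^{d-1}$ so as to absorb the negative powers of that linear form which the DDF operators introduce. First I would make the algebra in the proof of Theorem~\ref{thm:ddf} quantitative. Each $A^i_n$ acts fibrewise, and on a fibre $\cK_p\sim\cF$ the operator $A^i_n(p)=\bar V^i_n(nk(p))+p^iV_n(nk(p))$ applied to a vector of $\cF_0$ is, by the truncations in (\ref{finite1})--(\ref{finite3}), a \emph{finite} sum of terms $Q(p)\,(p^0+p^{d-1})^{-\g}\otimes(\text{finite product of }\a_{\pm m})$ with $Q$ a monomial in the components of $p$ and $\g\in\bN$; here both $\deg Q$ and $\g$ are bounded in terms of $n$ and of the degree and level of the Fock vector, since $k(p)=\tfrac1{2(p^0+p^{d-1})}(\ldots)$ contributes exactly one power of $(p^0+p^{d-1})^{-1}$ per $\a$-factor while only boundedly many $\a$-factors act nontrivially on a given $\cF_0$ vector. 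Moreover each $A^i_n$ raises the degree and level of a Fock vector by at most a fixed amount. Iterating over the fixed product $A^{i_1}_{-n_1}\cdots A^{i_k}_{-n_k}$, one thus obtains an integer $\Gamma=\Gamma(\{A^{i_j}_{-n_j}\},\chi)$ such that, for every smooth $f$,
\begin{equation}\nonumber
A^{i_1}_{-n_1}\cdots A^{i_k}_{-n_k}(f\otimes\chi)=\sum_i c_i\,[\,P_i(p)\,(p^0+p^{d-1})^{-\g_i}f(p)\,]\otimes\chi_i
\end{equation}
is a finite sum with $c_i\in\bC$, $\chi_i\in\cF_0$, $P_i$ polynomials and every $\g_i\le\Gamma$, and the same holds, with exponents bounded by $\Gamma$, for each right sub-product $A^{i_j}_{-n_j}\cdots A^{i_k}_{-n_k}$.

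Having such a $\Gamma$, I would choose a nonzero $\f\in C_0^\infty(\bR^d)$ with $\supp\f\subseteq\{\abs{x}\le R\}$, set $h=\hat\f$ (entire on $\bC^d$, obeying the Paley--Wiener bounds with that $R$, hence Schwartz on $\bR^d$) and put
\begin{equation}\nonumber
g(p)=(p^0+p^{d-1})^{\Gamma}h(p).
\end{equation}
Then $g$ is Schwartz, vanishes to order $\Gamma$ on the hyperplane $\{p^0+p^{d-1}=0\}$, and, being the restriction of the entire function $(\x^0+\x^{d-1})^{\Gamma}h(\x)$, is itself of Paley--Wiener type with the \emph{same} $R$.

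For item~\ref{item0} I would feed this $g$ into the displayed formula, which is legitimate since $g$ is smooth even though it does not lie in $\cD_0^{p^+}$. Running through the $k$ factors from the right, after $j$ of them the function component has the form $P(p)(p^0+p^{d-1})^{\Gamma-s}h(p)$ with $0\le s\le\Gamma$; since the exponent $\Gamma-s$ is nonnegative this component is Schwartz, hence in $\sL^2(\bR^d)$ even after further multiplication by polynomials in $p$, while the Fock component stays in $\cF_0$. This is exactly the integrability required in the domain descriptions of Theorem~\ref{thm:ddf}, for the full product and for every intermediate sub-product, so $g\otimes\chi\in D(A^{i_1}_{-n_1}\cdots A^{i_k}_{-n_k})$. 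For item~\ref{item1}, $L_m$ likewise acts fibrewise and, on $\cF_0$, is a finite sum of $\wick{\a_{m-n}\cdot\a_n}$ with $\a_0=p$; applied to $g\otimes\chi$ or to $\hat{\psi}$ it again produces finite sums of (polynomial in $p$)$\cdot h(p)$ tensored with vectors of $\cF_0$, which lie in $\cK$ and satisfy the fibrewise conditions, for every $m\in\bN$.

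For item~\ref{item2} the displayed formula yields
\begin{equation}\nonumber
\hat{\psi}=\sum_i c_i\,P_i(p)\,(p^0+p^{d-1})^{\Gamma-\g_i}h(p)\otimes\chi_i
\end{equation}
with $\Gamma-\g_i\ge0$, so each coefficient $g_i(p):=c_iP_i(p)(p^0+p^{d-1})^{\Gamma-\g_i}h(p)$ is a polynomial times $h$; it therefore extends to an entire function on $\bC^d$ satisfying $\abs{g_i(\x)}\le C_{i,N}(1+\abs{\x})^{-N}e^{R\abs{\Im\x}}$ for every $N\in\bN$, with one and the same $R$. By the Paley--Wiener theorem $g_i=\hat F_i$ with $F_i\in C_0^\infty(\bR^d)$, $\supp F_i\subseteq\{\abs{x}\le R\}$, whence $\hat{\psi}$ is the Fourier transform of $F=\sum_iF_i\otimes\chi_i\in C_0^\infty(\bR^d,\cF)$. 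The only step that is not pure bookkeeping is the first one -- proving that $\Gamma$ is finite, i.e. that the $k$-fold composition introduces only a bounded negative power of $p^0+p^{d-1}$ and bounded polynomial degrees in $p$. I expect this to be the main obstacle: it calls for an induction on the number of DDF factors, controlling simultaneously the number of surviving $k(p)$-factors (via the truncations (\ref{finite1})--(\ref{finite3})) and the growth of the degree and level of the intermediate Fock vectors. Everything after that is routine.
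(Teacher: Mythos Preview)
Your proposal is correct and follows essentially the same route as the paper: pick a Paley--Wiener function, multiply by a sufficiently high power of $(p^0+p^{d-1})$ to cancel the negative powers introduced by the DDF operators, and read off the finite tensor decomposition from the explicit formula in the proof of Theorem~\ref{thm:ddf}. The ``main obstacle'' you flag---the finiteness of $\Gamma$---is exactly what the paper disposes of in the paragraph immediately preceding the proposition (invoking the truncations (\ref{finite1})--(\ref{finite3}) and the fact that $\tilde L_m\chi=0$ for large $m$), so no further induction is really needed beyond what you already sketched.
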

\begin{proof}
Choose a entire analytic function $f(\xi)$ ($\xi\in\bC^d$) such that for each $N\in\bN$ there exist positive constants $C_N$ and $R$ such that
\begin{equation}
|f(\xi)|\leq\frac{C_N e^{R|\Im\xi |}}{(1+|\xi |)^N}
\end{equation}
for all $\xi\in\bC^d$ (the Paley-Wiener bounds, see \cite{Reed1980}). Then from the proof of theorem \ref{thm:ddf} and the previous discussion we see there is a positive integer $\g$ such that the entire analytic function $g(\xi)=(\xi^0+\xi^d)^{\g}f(\xi)$ satisfies $g\otimes\chi\in D(A^{i_1}_{-m_{1}}\dots A^{i_k}_{-m_{k}})$ and the conditions in items \ref{item0} and \ref{item1}. Moreover, $\hat{\psi}$ is a finite sum of terms of the type
\begin{multline}\nonumber
K\cdot [(p^{i_{1}}\cdots p^{i_{l}})\frac{1}{(p^0+p^{d-1})^{\g^\prime}} f(p)]\otimes [(\alpha_{-n_{1}}^0 +\alpha_{-n_{1}}^d) \cdots\\
\cdots(\alpha_{-n_{q}}^0 +\alpha_{-n_{q}}^d)\times (\alpha_{n^\prime_{1}}^{0}+\alpha_{n^\prime_{1}}^d) \cdots (\alpha_{n_{s}^\prime}^0 +\alpha_{n_{s}^\prime}^d)\chi]
\end{multline}
where once more $l,\g^\prime \in\bN$ and $K\in\bC$ are constants depending on $\chi\in\cF_0$ and the $A^j_n$'s. Since each of the finite number of functions $(p^{i_{1}}\cdots p^{i_{l}})(p^0+p^{d-1})^{-\g^\prime} g(p)$ by construction still satisfies the Paley-Wiener bounds with different constants $C_N$ but {\it one and the same} $R$, item \ref{item2} is proved.
\end{proof}
\bigskip
\begin{Remark}\begin{enumerate}
\item
We observe that the constant $R$ can be chosen arbitralily small. Thus we obtain vectors $\psi\in C_0^{\infty}(\bR^d,\cF)$ with support contained in $d$-spheres with arbitrary radius.
\item
The states we constructed are clearly in $\sS(\bR^d)\otimes\cF_0$ and as a consequence in the domain of the operators $U(a,\L)$ implementing the representation of the Poincar\'e group.
\end{enumerate}\end{Remark}

\section{Free string field theory}\label{field}

After all this preparation we come to the main subject of this work, namely the study of the localization properties of {\it second} quantized open bosonic strings. To this aim, we adopt Dimock's approach to the construction of free string field theory developed in \cite{Dimock2002}, which we briefly recall below. The string field turns out to be a more or less ordinary $local$ indefinite metric quantum field theory (see \cite{Strocchi1993}, but also \cite{Mintchev1980} for a more specific study of the Fock construction). The word ``local'' will mean as usual that the fields constructed below will have vanishing commutators for space-like separated arguments.\\
First, to obtain a casual theory we need to exlude tachyons from the one particle state space and thus we define
\begin{equation}\label{eq:pos}
\cH_+ =\sum_{r\geq0}\cK_r
\end{equation}
For any  $F(x) \in C^{\infty}_0(\bR^d ,\cF,dx)$ (the space-time representation) we introduce the projection  $\Pi F  \in \cH$ by specifying that its restriction to the hyperboloid $V^+_r$ is given by
\begin{equation}\label{Pi}
(\Pi F)_r(p)    =  \sqrt{ 2 \pi} \ P_r \hat F ( p )  
\end{equation}
where the hat denotes Fourier transformation and $P_r$ the projection on the subspace $M^2=r$. We note that $\Pi F \in \cH_+$ for   $  F \in \sS(\bR^d  ,  \cF)$.\\
We are now ready for the following
\begin{definition}
Let $a^\dagger$ and $a$ denote the usual creation and destruction operators (see \cite{Mintchev1980} but also \cite{Reed1975}) on the symmetric Krein-Fock space
\begin{equation}
\sH = \G(\cH_+)= \bigoplus_{n=0}^{\infty}(\cH_+^{(n)})_s
\end{equation}
(the subscript $s$ indicates symmetrization) with fundamental symmetry $\sJ=\G(J)$ and indefinite inner product $(\Psi,\sJ\Xi)$. Then the string field operator is defined as
\begin{equation}
\Phi (F)   = \frac{1}{\sqrt2} [ a^{\dagger} (\Pi F) + a (\Pi F )]
\end {equation}
on the domain $\sD_0\subset\sH$ of (quasilocal) finite particle vectors, {\it i.e.} if $\Psi\in\sD_0$ we have $\Psi=(\Psi^{(0)},\cdots,\Psi^{(n)},0,\cdots)$ where $\Psi^{(i)}\in \sS(\bR^d  ,  \cF)^{(i)}_s$.
\end{definition}
\bigskip

Combining \cite{Mintchev1980} and \cite{Dimock2002}, one has the following
\begin{theorem}  \label{stringfield}
\begin{enumerate}
\item The string field  $\Phi(F)$ is closable and $\sJ$-symmetric $\forall  F \in \sS(\bR^d  ,  \cF)$.
\item $\sD_0$ is an invariant dense set of analytic vectors for $\Phi (F)$.
\item If $\{F_k\}\subset\cH_+$ and $F_k\to F$ in the topology of $\sS(\bR^d  ,  \cF)$, then
\begin{equation}\nonumber
\text{s}-\lim_{k\to\infty}\Phi(F_k)\Psi=\Phi(F)\Psi
\end{equation}
\item The vacuum vector $\Omega=\{1,0,\cdots,0,\cdots\}\in\sD_0$ is cyclic.
\item For every $\Psi\in\sD_0$ and $F,G\in\sS(\bR^d  ,  \cF)$ we have
\begin{equation}\label{fieldcomm}
\Phi(F)\Phi(G)\Psi-\Phi(G)\Phi(F)\Psi=i\Im \langle\Pi F,\Pi G\rangle\Psi=-i\langle F,E G\rangle\Psi
\end{equation}
and the field equation $\Phi((-\square+M^2)F)\Psi=0$ holds. In particular, if  $F,G$ have spacelike separated supports
there holds (locality)
\begin{equation}\label{vanishing}
[ \Phi (F), \Phi(G)]\Psi  = 0.
\end{equation}
\item   There is a positive energy representation $\sU(a, \L)=\G(U(a, \L))$ (see definition \ref{singlestring}) 
of the Poincar\'e group on $\sH$ satisfying, for every $\Psi\in\sD_0$ such that $\sU(a, \L)^{-1}\Psi$ is defined and every $F\in D(u)$ (see \ref{poi}), the equation
\begin{equation}
\sU(a, \L)\Phi(F) \sU(a, \L)^{-1}\Psi
= \Phi (F_{a, \L})\Psi
\end{equation}
where  $F_{a, \L}(x)  =  u(\L) F(\L^{-1}(x - a))$.
\end{enumerate}
\end{theorem}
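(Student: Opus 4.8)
The plan is to reduce the statement to two ingredients: the general theory of the symmetric Fock space built over a Krein space --- for which I would quote \cite{Mintchev1980}, see also \cite{Reed1975} --- together with a handful of string-specific properties of the one-particle map $\Pi\colon\sS(\bR^d,\cF)\to\cH_+$ of \eqref{Pi}: its continuity, the density of its range, a Fourier-analytic rewriting of $\Im\langle\Pi F,\Pi G\rangle$, and its intertwining with the Poincar\'e action, all of which are essentially contained in \cite{Dimock2002}. With $\Pi$ under control, parts (1)--(4) and the algebraic core of (5)--(6) become routine Fock-space bookkeeping.

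For (1)--(4) the starting point is the elementary one-particle estimate $\|a^{\#}(h)\Psi^{(n)}\|\le\sqrt{n+1}\,\|h\|\,\|\Psi^{(n)}\|$ in the norm $\|\cdot\|$ (with $a^{\#}$ denoting $a$ or $a^\dagger$). This at once makes $\sD_0$ invariant and $\Phi(F)$ closable, while $\sJ$-symmetry is built in, since $a(\Pi F)$ is by construction the $\sJ$-adjoint of $a^\dagger(\Pi F)$ on $\sD_0$, so $\Phi(F)^\dagger\supset\Phi(F)$. Iterating the estimate gives $\|\Phi(F)^n\Psi\|\le C(F)^n\bigl((n+k)!/k!\bigr)^{1/2}\|\Psi\|$ for $\Psi$ of particle number at most $k$, hence every vector of $\sD_0$ is analytic for $\Phi(F)$. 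The strong convergence in (3) follows because $\Phi(F)$ depends linearly on $\Pi F$ and $\Pi$ is continuous from $\sS(\bR^d,\cF)$ into $\cH_+$ --- which reduces to continuity of the Fourier transform on $\sS$ together with boundedness of restriction-and-projection to the shells $V^+_r$. For cyclicity (4) I would check that the range of $\Pi$ is dense in $\cH_+$ and then run the standard induction on particle number, using $a(\Pi F)\Omega=0$, so that $\Phi(F)\Omega=\tfrac1{\sqrt2}\,a^\dagger(\Pi F)\Omega$ and the $n$-particle sector is reached by $n$-fold application of suitable $\Phi(F)$'s.

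The substance is in (5). The canonical commutation relations on the Krein--Fock space, $[a(h),a^\dagger(g)]=\langle h,g\rangle\,I$ and $[a^\dagger,a^\dagger]=[a,a]=0$, give immediately, on $\sD_0$,
\[
[\Phi(F),\Phi(G)]\Psi=\tfrac12\bigl(\langle\Pi F,\Pi G\rangle-\langle\Pi G,\Pi F\rangle\bigr)\Psi=i\,\Im\langle\Pi F,\Pi G\rangle\,\Psi.
\]
The identity $\Im\langle\Pi F,\Pi G\rangle=-\langle F,EG\rangle$ is then the key Fourier computation: substituting \eqref{Pi} one gets $\langle\Pi F,\Pi G\rangle=2\pi\sum_{r\ge0}\int_{V^+_r}\langle P_r\hat F(p),P_r\hat G(p)\rangle\,d\m_r(p)$, and its imaginary part reorganises into $-\langle F,EG\rangle$, where $E$ is the causal two-point kernel assembled from the Pauli--Jordan commutator distributions $\Delta_r$ of the Klein--Gordon operators $-\square+r$ and the projections $P_r$. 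Locality, \eqref{vanishing}, is then exactly the statement that $\Delta_r$ is supported in the closed light cone \emph{for every $r\ge0$}; this is where the passage to the tachyon-free one-particle space $\cH_+$ of \eqref{eq:pos} is indispensable --- it is the reason the index $r$ is restricted to $r\ge0$. The field equation is immediate: on $V_r$ one has $p^2=-r$, so after $P_r$ the multiplier $p^2+M^2$ acts as $-r+r=0$; hence $\Pi\bigl((-\square+M^2)F\bigr)=0$ and $\Phi\bigl((-\square+M^2)F\bigr)=0$.

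For (6), I would first observe that $U(a,\L)$ maps $\cH_+$ into itself, because the mass set $\{r=0,2,\dots\}$ is Poincar\'e invariant and $u(\L)$ commutes with $N$ by Lorentz invariance of $\eta_{\m\n}$, hence with $M^2$ and with every $P_r$. Thus $\sU(a,\L)=\G(U(a,\L))$ is well defined, functoriality of $\G$ yields $\sU(a,\L)a^{\#}(h)\sU(a,\L)^{-1}=a^{\#}(U(a,\L)h)$, and a change of variables in \eqref{Pi} --- using $e^{-ip\cdot a}$, $u(\L)P_r=P_ru(\L)$ and Lorentz invariance of $d\m_r$ --- gives $U(a,\L)\Pi F=\Pi F_{a,\L}$ with $F_{a,\L}(x)=u(\L)F(\L^{-1}(x-a))$, which is the asserted covariance; positivity of the energy descends through $\G$ from the one-particle spectral condition $p^0>0$ on $V^+_r$. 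The step I expect to cost the most care is not any single algebraic identity but rather two points of hygiene: the causal support property of $E$ in (5), which is after all the very localisation statement this construction is meant to produce, and, in (6), the tracking of domains, since $\sU(a,\L)$ and $u(\L)$ are unbounded on the indefinite-metric space, so one must verify that $\sD_0\cap D\bigl(\sU(a,\L)^{-1}\bigr)$ is large enough and that $\Pi F_{a,\L}$ still lies in $\cH_+$ when $F\in D(u)$.
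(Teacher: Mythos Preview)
Your proposal is correct and follows exactly the route the paper takes: the paper does not give a self-contained proof but simply states that the theorem follows by ``combining \cite{Mintchev1980} and \cite{Dimock2002}'', adding only the explicit formula for the propagator $E$ and the remark that locality follows from its support properties. Your sketch is precisely an unpacking of what is contained in those two references --- the Krein--Fock machinery from \cite{Mintchev1980} for parts (1)--(4) and the algebraic core of (5)--(6), and the string-specific analysis of $\Pi$, the propagator $E$, and the Poincar\'e intertwining from \cite{Dimock2002} --- so there is no divergence in strategy, only in the level of detail you supply.
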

\bigskip The symbol $E$ in equation (\ref{fieldcomm}) indicates the propagator $E=E^+ - E^-$, with
\begin{equation}
(E^{\pm} F)(x)  = \frac{1}{2(2 \pi)^{d/2}}  \int_{\gamma^{\pm} \times \bR^{d-1}}
 \frac{ e^{ip\cdot x}}{p^2 + M^2} \ 
\hat  F(p) dp 
\end{equation}
for $F \in C^{\infty}_0(\bR^d ,\cF)$.  The  $p^0$ contour $\gamma^{\pm} $ in the integral is the real line shifted by a small imaginary quantity $\pm i\epsilon$ ($\epsilon>0$), while $M^2$ is of course the (square) mass operator defined in Lemma \ref{lem:mass}. Locality follows from the properties of $E$.\\
As already anticipated, constraints are imposed in a Gupta-Bleuler form: taking the (anti)-Fourier transform $\check{L}_m$ of the operators $L_m$, the condition is
\begin{equation}\label{fieldconst}
(\check{L}_m\Phi_-)(F)\Psi\equiv a(\Pi\check{L}_{-m} F)\Psi=a(L_{-m}\Pi F)\Psi=0 \quad\quad m\geq0
\end{equation}
for $\Psi\in\sH$. The final result is
\begin{equation}\begin{split}\nonumber
\sH^{phys}=\sK^{\prime}/\sK^{\prime\prime}\quad\quad\quad\quad\quad\\
\sK^{\prime}=\G(\cH_+^{\prime}) \quad\quad\quad\quad\quad \cH_+^{\prime}=\cH^{\prime}\cap\cH_+\\
\sK^{\prime\prime}=\sK^{\prime}\cap(\sK^{\prime})^{\perp}\quad\quad\quad\quad\quad
\end{split}\end{equation}
Finally, we have
\begin{equation}\nonumber
\sH^{phys}=\G(\cH_+^{phys})
\end{equation}
so that for $d=26$, $b=1$ the inner product restricts on $\sH^{phys}$ to a positive definite one.
As it is customary in Gupta-Bleuler field theory, the field $\Phi$ determines an operator on $\sH^{phys}$ whenever smeared on functions satisfying certain conditions. This leads to the following
\begin{definition}\label{testcon}
A test function $F\in\sS(\bR^d,\cF)$ is called constrained if $\Pi F\in \cH_+^{\prime}$.
\end{definition}
\bigskip
As in \cite{Dimock2002}, to get real constrained test functions it is convenient to choose the conjugation $C_1$ defined on $\sS(\bR^d)\otimes\cF_0$ by
\begin{equation}\begin{split}\nonumber
C_1 & \a_n^0C_1 =\a_n^0\\
C_1\a_n^iC_1=-&\a_n^0\quad\quad\quad\quad i=1,\cdots,d
\end{split}\end{equation}
Real will then mean $C_1\hat{F}=\hat{F}$.
We are finally ready for the following
\begin{theorem}
Constrained (real) test functions $F\in\sS(\bR^d,\cF_0)$ with support contained in $d$-spheres with arbitralily small radius $R$ exist. Moreover, the Poincar\'e transformations $F\to F_{a, \L}$ (see Theorem \ref{stringfield}) are well defined for such $F$'s.
\end{theorem}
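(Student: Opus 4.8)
The plan is to exhibit a constrained test function as essentially the Fourier transform of a transverse DDF state built on the Fock vacuum, and then to pass to a real representative by averaging with the conjugation $C_1$. Concretely, I would first fix the DDF data: an integer $k\geq1$, transverse indices $i_1,\dots,i_k\in\{1,\dots,d-2\}$ and positive integers $n_1,\dots,n_k$, and set $\bar{n}=\sum_l n_l\geq1$ and $r_0=2(\bar{n}-b)$; since the labels $-2,0,2,\dots$ of $\cH$ are exactly the eigenvalues $2(n-b)$ of $M^2$ (Lemma \ref{lem:mass}), one is working with $b=1$ and hence $r_0\geq0$. Applying Proposition \ref{compact} with $\chi=\Omega$ then yields a function $g$ --- the restriction to $\bR^d$ of an entire function obeying Paley--Wiener bounds with a radius $R$ which, by the Remark following that proposition, may be chosen arbitrarily small --- such that $g\otimes\Omega$ lies in the domain of $A^{i_1}_{-n_1}\cdots A^{i_k}_{-n_k}$ and of every $L_m$, and such that $\hat{\psi}:=A^{i_1}_{-n_1}\cdots A^{i_k}_{-n_k}(g\otimes\Omega)$ is a finite sum $\sum_j g_j\otimes\chi_j$ with $\chi_j\in\cF_0$ and each $g_j$ in the same Paley--Wiener class; equivalently $\hat{\psi}=\hat{F}$ for some $F\in C_0^\infty(\bR^d,\cF)$ which is in fact $\cF_0$-valued (the $\chi_j$ being in $\cF_0$), hence $F\in\sS(\bR^d,\cF_0)$ with support in a $d$-sphere of radius $R$ about the origin, and a generic choice of $g$ makes $\hat{\psi}\neq0$.

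The second step is to verify that this $F$ is constrained, i.e.\ that $\Pi F\in\cH_+^{\prime}$. Here I would invoke (\ref{salvezza2}) and (\ref{constraintsf}), which are legitimate for the present $g$ by the discussion preceding Proposition \ref{compact} (a high enough order of vanishing of $g$ at $p^0+p^{d-1}=0$ makes both the relations (\ref{ddf2}) and the constraint identities valid on $g\otimes\Omega$). By (\ref{salvezza2}) the vector $\hat{\psi}$ sits entirely in the eigenspace $M^2=r_0$; hence by (\ref{Pi}) the section $\Pi F$ is concentrated on the hyperboloid $V_{r_0}^+$ (legitimately, since $r_0\geq0$), there equals $\sqrt{2\pi}$ times the restriction of $\hat{\psi}$, and satisfies $(-r_0+M^2)\Pi F=0$, so $L_0\Pi F=0$; by (\ref{constraintsf}) one has $L_m\hat{\psi}=0$ for every $m>0$ as an identity valid pointwise in $p$, whence $L_m\Pi F=0$ for $m>0$ and, the defining integral vanishing, $\Pi F\in D(L_m)$. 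Since also $\Pi F\in\cH_+$ for $F\in\sS(\bR^d,\cF)$, this gives $\Pi F\in\cH_+^{\prime}$, i.e.\ $F$ is constrained in the sense of Definition \ref{testcon}, with support in a $d$-sphere of radius $R$, and $R$ is arbitrary.

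For the reality refinement I would use that $C_1$ is an antilinear involution on $\sS(\bR^d)\otimes\cF_0$ leaving $\cF_0$ invariant: because every metric contraction $\alpha_{m-n}\cdot\alpha_n$ (with $\alpha_0=p$) entering $L_m$ and $M^2$ is unchanged by the combined sign flip of spatial components it performs, $C_1$ commutes with all the $L_m$ and with $M^2$, and therefore preserves $\cH_+^{\prime}$; and since it turns an entire function $h(\xi)$ obeying Paley--Wiener bounds into $\xi\mapsto\overline{h(\bar{\xi})}$, which obeys the same bounds with the same $R$, it maps functions with support in a $d$-sphere of radius $R$ about the origin to functions with the same property. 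Consequently at least one of $\tfrac12(F+C_1F)$ and $\tfrac{i}{2}(F-C_1F)$ is a nonzero, $C_1$-invariant (hence real) constrained test function in $\sS(\bR^d,\cF_0)$ with support in a $d$-sphere of radius $R$. Finally, since $F$ takes values in $\cF_0$ and each $u(\Lambda)$ is defined on $\cF_0$ by (\ref{poi}), one has $F\in D(u)$, so $F_{a,\Lambda}(x)=u(\Lambda)F(\Lambda^{-1}(x-a))$ is well defined --- again smooth, $\cF_0$-valued and compactly supported --- and the transformation law of Theorem \ref{stringfield}(6) applies; no claim is made (and in general none holds, by the Remark after Theorem \ref{thm:ddf}) that $F_{a,\Lambda}$ is still constrained.

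The one genuinely delicate point is the second step: one must be sure that a transverse DDF state, after the reconfiguration of the one-string Hilbert space and restriction to the mass shells, really lands inside the constraint subspace $\cH_+^{\prime}$. This is exactly what is encoded in (\ref{constraintsf})--(\ref{salvezza2}), whose content rests in turn on the domain analysis carried out for Theorem \ref{thm:ddf} and Proposition \ref{compact}; granting that, the Paley--Wiener bookkeeping needed to check that $C_1$-symmetrization neither enlarges the support nor annihilates the function is routine, and the Poincaré clause is immediate from $F$ being $\cF_0$-valued.
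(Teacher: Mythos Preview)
Your argument is correct and follows essentially the same route as the paper: build $\hat F$ as a transverse DDF state on the oscillator vacuum via Proposition~\ref{compact}, use (\ref{constraintsf})--(\ref{salvezza2}) to see that $\Pi F$ lands in $\cH_+'$, then $C_1$-symmetrize for reality and invoke $\cF_0$-valuedness for the Poincar\'e clause. The only cosmetic difference is in the reality step, where the paper writes the symmetrization fibrewise as $\hat G(\onp,\bp)+C_1\hat G(\onp,-\bp)$ (the spatial reflection being part of the full action of $C_1$ via $\a_0=p$), which is exactly your $\hat F+C_1\hat F$; your description of $C_1$ on entire functions should accordingly include the $\bp\mapsto-\bp$ flip, but this does not affect the Paley--Wiener radius.
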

\begin{proof}
Consider any function $G\in\sS(\bR^d,\cF_0)$ with Fourier transform
\begin{equation}\nonumber
\hat{G}=(A^{i_1}_{-n_{1}}\dots A^{i_k}_{-n_{k}})(f\otimes\Omega)
\end{equation}
as constructed in proposition \ref{compact} (but see also the remark thereafter). Of course here $\Omega\in\cF$ indicates the oscillator one particle space vacuum. From equations (\ref{constraintsf}) and (\ref{salvezza1}), (\ref{salvezza2}) and the very definition of $\Pi$ (see equation \ref{Pi}), we see that
\begin{equation}\begin{split}\nonumber
L_m\Pi F=L_m( \ocp,\bp) \hat{G}( \ocp,\bp)=0 \qquad m>0\\
\nonumber L_0\Pi G=L_0( \ocp,\bp)\hat{G}( \ocp,\bp)=0\qquad\quad
\end{split}\end{equation} (with $c=2(-b+\bar{n})$ and $\bar{n}=\sum_{l=1}^{k}n_l$) so that $\Pi G\in \cH_+^{\prime}$. Moreover, since $C_1 L_m( \onp, \bp) C_1 =   L_m( \onp,- \bp)$ the function
\begin{equation}\nonumber
\hat{G}^\prime( \onp,\bp)=\hat{G}( \onp,\bp)+C_1\hat{G}( \onp,- \bp)
\end{equation}
is real and satisfies both the Paley-Wiener bounds with the same $R$ and the constraints. Finally, since $\hat{G}\in\sS(\bR^d,\cF_0)$ the transformations $G\to G_{a, \L}$ are well defined as claimed. Fixing the support and taking (finite real) linear combinations, we obtain a rich class of functions with the desired properties.
\end{proof}

\emph{Acknowledgements.} The authore would like to thank Robbert Dijkgraaf for several interesting and useful discussions during his stay in Amsterdam for his Phd thesis and Sergio Doplicher for his constant support and encouragement.
\providecommand{\bysame}{\leavevmode\hbox to3em{\hrulefill}\thinspace}

\end{document}